\newtheorem{theorem}{Theorem}[section]
\newtheorem{lemma}[theorem]{Lemma}
\newtheorem{definition}[theorem]{Definition}
\newtheorem{remark}[theorem]{Remark}
\newtheorem{assumption}{Assumption}
\numberwithin{equation}{section}
\newcommand{\R}{{\mathbb{R}}}
\newcommand{\N}{{\mathbb{N}}}
\newcommand{\PP}{\mathds{P}}
\newenvironment{nouppercase}{%
	\renewcommand{\uppercasenonmath}[1]{}}{}
\begin{document}

\begin{abstract}
This article is concerned with a data-driven divide-and-conquer strategy to construct symbolic abstractions for interconnected control networks with unknown mathematical models. We employ a notion of \emph{alternating bisimulation functions} (ABF) to quantify the closeness between state trajectories of an interconnected network and its symbolic abstraction. Consequently, the constructed symbolic abstraction can be leveraged as a beneficial substitute for the formal verification and controller synthesis over the interconnected network. In our data-driven framework, we first establish a relation between each unknown subsystem and its data-driven symbolic abstraction, so-called \emph{alternating pseudo-bisimulation function (APBF)}, with a guaranteed probabilistic confidence. We then provide compositional conditions based on \emph{$\max$-type small-gain techniques} to construct an ABF for an unknown interconnected network using APBF of its individual subsystems, constructed from data. We demonstrate the efficacy of our data-driven approach over a room temperature network composing $100$ rooms with unknown models. We construct a symbolic abstraction from data for each room as an appropriate substitute of original system and compositionally synthesize controllers regulating the temperature of each room within a safe zone with some guaranteed probabilistic confidence.
\end{abstract}

\title{{\LARGE Symbolic Abstractions with Guarantees: A Data-Driven Divide-and-Conquer Strategy}}

\author{{\bf {\large Abolfazl Lavaei,~\emph{{\small Senior Member,~IEEE}}}}\\{\normalfont School of Computing, Newcastle University, United Kingdom}}

\pagestyle{fancy}
\lhead{}
\rhead{}
  \fancyhead[OL]{Abolfazl Lavaei}

  \fancyhead[EL]{Symbolic Abstractions with Guarantees: A Data-Driven Divide-and-Conquer Strategy}
  \rhead{\thepage}
 \cfoot{}
 
\begin{nouppercase}
	\maketitle
\end{nouppercase}

\section{Introduction}

Interconnected networks have been becoming popular during the past two decades as a valuable modeling scheme characterizing a broad range of real-world engineering systems. These networks find applications in  automated vehicles, drone networks, chemical networks, communication networks, and so on. In general, formal verification and controller design for this type of large-scale complex networks are computationally burdensome. This is especially due to (i) dealing with uncountable state/input sets with large dimensions, and (ii) absence of closed-form mathematical models in most of real-life scenarios. 

To alleviate these difficulties, one rewarding solution is to use symbolic abstractions as finite-state approximations of continuous-space models. By employing a constructed symbolic abstraction as an appropriate substitution of original (concrete) system, formal analyses can be performed over the abstract model. The acquired results can then be transfered back on the concrete domain, while quantifying a guaranteed error bound between state trajectories of two systems.  Accordingly, it can be guaranteed that the concrete system also satisfies the same specification as its symbolic abstraction within some quantified error bound~\cite{lavaei2022automated}.

There have been two variants of symbolic abstractions: \emph{sound} and \emph{complete}~\cite{tabuada2009verification}. Complete abstractions propose \emph{sufficient and necessary guarantees}: there exists a controller enforcing a desired property on a symbolic abstraction \emph{if and only if} there exists a controller satisfying the same specification over the original system. However, sound abstractions only provide \emph{sufficient guarantees}: not being able to synthesize a controller via a sound abstraction does not imply the lack of controller over the original domain.

There exist extensive results on abstraction-based analysis of control systems. Existing results encompass constructing (in)finite-abstractions for various classes of dynamical systems~\cite{tabuada2009verification,girard2009approximately,le2013mode,girard2015safety,coogan2015efficient}, to name a few.  However, constructing symbolic abstractions in a monolithic fashion  suffers significantly from  the \emph{curse of dimensionality} problem. To mitigate this computational complexity, \emph{compositional} abstraction-based techniques have received remarkable attentions to build a symbolic abstraction for an interconnected network using those of smaller subsystems~\cite{tazaki2008bisimilar,pola2016symbolic,swikir2019compositional,zamani2017compositional}. 

The above-mentioned studies on the construction of symbolic abstractions unfortunately require knowing precise dynamics of underlying systems. Although \emph{indirect data-driven} approaches strive to learn unknown dynamics via identification techniques~\cite{Hou2013model}, obtaining an accurate mathematical model is generally computationally challenging especially if the unknown system is complex. In addition, even if a model can be identified via system identification approaches, the relation between the identified model and its symbolic abstraction should be still constructed. Accordingly, the underlying complexity exists in two levels of model identification and establishing the relation. In this work, we develop a \emph{direct data-driven} scheme, without performing any model identification, and construct symbolic abstractions together with their associated similarity relations by directly gathering data from trajectories of unknown concrete systems.

The original contribution of this work is to develop a data-driven divide-and-conquer strategy for constructing symbolic abstractions for unknown interconnected networks while providing a guaranteed probabilistic confidence. The proposed approach relies on a notion of \emph{alternating bisimulation functions} (ABF) to quantify the closeness between trajectories of an interconnected network and its symbolic abstraction. In our data-driven scheme, we first recast conditions of \emph{alternating pseudo-bisimulation functions} (APBF) as a robust optimization program (ROP). By gathering samples from trajectories of each unknown subsystem, we provide a scenario optimization program (SOP) for each original ROP. We construct APBF from data with a guaranteed probabilistic confidence by establishing a probabilistic bridge between optimal values of SOP and ROP. We then propose a compositional approach using $\max$-type small-gain reasoning to construct an ABF for an unknown interconnected network via data-driven APBF of smaller subsystems. In fact, our data-driven divide and conquer approach resolves the sample complexity problem existing in almost all data-driven approaches whose main goal is to certify
some properties over unknown systems via data. In particular, the number of
data for providing formal analysis over unknown systems is \emph{exponential}
with respect to the size of the underlying system. However, the sample complexity in our compositional approach is reduced to subsystems: the
number of samples \emph{linearly} increases with the number of individual subsystems. We verify our data-driven results over a room temperature network composing $100$ rooms with unknown models.

There has been a limited number of work on the construction of symbolic abstractions using data. Existing results include: construction of symbolic abstractions via a Gaussian process approach~\cite{hashimoto2020learning}; data-driven abstraction of monotone systems with disturbances~\cite{makdesi2021efficient}; data-driven growth bound computation for constructing finite abstractions~\cite{kazemi2022data}; data-driven construction of symbolic abstractions for verification of unknown systems~\cite{coppola2022data}; and data-driven construction of finite abstractions for incrementally input-to-state stable systems~\cite{Lavaei_LCSS22_1}. In comparison, we propose a \emph{compositional data-driven} framework using small-gain reasoning for constructing symbolic abstractions of \emph{large-scale interconnected networks}, whereas the results
in~\cite{hashimoto2020learning,makdesi2021efficient,kazemi2022data,coppola2022data,Lavaei_LCSS22_1} are all tailored to \emph{monolithic systems}. As a result, the proposed approaches in~\cite{hashimoto2020learning,makdesi2021efficient,kazemi2022data,coppola2022data,Lavaei_LCSS22_1} suffer from the sample complexity problem and are not useful in practice when dealing with high-dimensional systems. In addition, the works~\cite{hashimoto2020learning,makdesi2021efficient,kazemi2022data,coppola2022data} construct \emph{sound} abstractions based on data (\emph{sufficient guarantees}), whereas our data-driven technique is for the construction of \emph{complete} abstractions (\emph{sufficient and necessary guarantees}).

\section{Discrete-Time Nonlinear Control Systems}\label{Sec: dt-NDS}

\subsection{Notation}

In this work, $\mathbb{R},\mathbb{R}^+$, and $\mathbb{R}^+_0$, represent sets of real, positive, and non-negative real numbers, respectively. Symbols $\mathbb{N} := \{0,1,2,...\}$ and $\mathbb{N}^+=\{1,2,...\}$ denote, respectively, sets of non-negative and positive integers. A column vector, given $N$ vectors $x_i \in \mathbb{R}^{n_i}$, is represented by $x=[x_1;\dots;x_N]$. We denote the minimum and maximum eigenvalues of a symmetric matrix $P$, respectively, by $\lambda_{\min}(P)$ and $\lambda_{\max}(P)$. Given two sets $X$ and $Y$, $\mathscr{R}\subseteq X \times Y$ denotes a relation which relates $x \in X$ to $y \in Y$ if $(x, y) \in \mathscr{R}$, equivalently $x\mathscr{R}y$. Given any scalar $a\in\mathbb R$ and vector $x\in\mathbb{R}^{n}$, $\vert a\vert$ and $\Vert x\Vert$ represent, respectively, the absolute value and the infinity norm. For a matrix $P\in\mathbb R^{m\times n}$, $\|P\| := \sqrt{\lambda_{\max}(P^\top P)}$. Given a probability space $(\mathcal D,\mathbb B(\mathcal D),\PP)$, we denote by $\mathcal D^N$ the $N$-Cartesian product of set $\mathcal D$, and by $\PP^N$ its corresponding product measure. A Gamma function $\Gamma$ is defined as $\Gamma(a) = (a - 1)!$ for any positive integer $a$ and $\Gamma(a + \frac{1}{2}) = (a - \frac{1}{2})\times (a - \frac{3}{2})\times \dots \times \frac{1}{2}\times \pi^{\frac{1}{2}}$ for any non-negative integer $a$. We show the feasibility of an optimization problem by $\vDash$.

\subsection{Discrete-Time Nonlinear Control Systems}
We first present the formal definition of discrete-time nonlinear control systems. 

\begin{definition}\label{dt-NCS}
	A discrete-time nonlinear control system (dt-NCS) is characterized by
	\begin{align}\label{EQ:1}
	\Xi=(X,U,D,f),
	\end{align}
	where:
	\begin{itemize}
		\item $X\subseteq \mathbb R^n$ is a state set;
		\item $U = \{\nu_1,\nu_2,\dots,\nu_m\}$ with $\nu_i \in \mathbb R^{\bar m}, i \in\{1,\dots,m\}$, is a finite input set;
		\item  $D\subseteq \mathbb R^p$ is a disturbance set;
		\item $f:X\times U\times D\rightarrow X$ is a transition map, which is \emph{unknown} in our setting.
	\end{itemize}
\end{definition}
The evolution of dt-NCS can be described by
\begin{align}\label{EQ:2}
\Xi\!:x(k+1)=f(x(k),\nu(k),d(k)),\quad k\in\mathbb N,
\end{align} 
for any $x\in X$, $\nu(\cdot):\mathbb N\rightarrow U$, and $d(\cdot):\mathbb N\rightarrow D$. The \emph{state trajectory} of $\Xi$ under sequences $\nu(\cdot), d(\cdot)$ starting from $x(0)= x_0$ is denoted by $x_{x_0\nu d}:\mathbb N\rightarrow X$.

Since the ultimate goal is to construct a symbolic abstraction for a network of dt-NCS, we consider the system in~\eqref{EQ:1} as a \emph{subsystem} and provide another definition for the \emph{interconnected} dt-NCS without disturbances $d$ which is acquired as a composition of individual subsystems with disturbances $d$.

\begin{definition}\label{Def:2}
	Consider $\mathcal M\!\in\!\mathbb N^+$ dt-NCS $\Xi_i\!=\!(X_i,U_i,D_i,f_i)$, $i\in \{1,\dots,\mathcal M\}$, with their disturbances partitioned as	
	\begin{align}\label{Eq:3}
	d_i&=[{d_{i1};\ldots;d_{i(i-1)};d_{i(i+1)};\ldots;d_{i\mathcal M}}].
	\end{align}
	An interconnected dt-NCS is defined as
	$\Xi=(X,U,f)$, represented by
	$\mathcal{I}(\Xi_1,\ldots, \Xi_{\mathcal M})$, where $X:=\prod_{i=1}^{\mathcal M}X_i$, $U:=\prod_{i=1}^{\mathcal M}U_i$, and $f:=[f_1;\dots;f_{\mathcal M}]$, such that:
	\begin{align}\label{Eq:4}
	\forall i,j\in \{1,\dots,\mathcal M\},i\neq j\!: \quad d_{ij}=x_{j}, ~~ X_{j}\subseteq D_{ij},
	\end{align}
	where $D_i:=\prod_{j\neq i}D_{ij}$. Such an interconnected dt-NCS is characterized by
	\begin{equation}\label{Eq:5}
	\Xi\!:x(k+1)=f(x(k),\nu(k)), \quad \text{where}~ f: X \times U \rightarrow X.
	\end{equation}
\end{definition}

\subsection{Symbolic Abstractions}\label{Symbolic}

Here, we construct symbolic abstractions as finite-state approximations of dt-NCS~\cite{pola2016symbolic}. To do so, state and disturbance sets are assumed to be compact. For constructing symbolic abstractions, we first partition state and disturbance sets as $X = \cup_i \mathsf X_i$ and $D = \cup_i \mathsf D_i$, and then pick representative points $\hat x_i\in \mathsf X_i$ and $\hat d_i\in \mathsf D_i$ within those partition sets as finite states and disturbances. In the next definition, we formally present how to construct symbolic abstractions.

\begin{definition}\label{Symbolic_Abs}
	Consider a dt-NCS $\Xi=(X,U,D,f)$ in~\eqref{EQ:1}. The constructed \emph{symbolic abstraction} $\hat\Xi$ is characterized as
	\begin{equation*}
	\hat\Xi =(\hat X, U, \hat D,\hat f),
	\end{equation*}
	where $\hat X$ and $\hat D$ are discrete state and disturbance sets of~$\hat\Xi$. Furthermore, $\hat f:\hat X \times U\times \hat D\rightarrow\hat X$ is a transition function defined as 
	\begin{equation}\label{EQ:3}
	\hat f(\hat{x}, \nu, \hat d) = \mathcal P(f(\hat{x}, \nu, \hat d)),
	\end{equation}
	where $\mathcal P\!\!:X\rightarrow \hat X$ is a quantization map with \emph{state discretization parameter} $\sigma$ fulfilling the following condition:
	\begin{equation}\label{EQ:4}
	\Vert \mathcal P(x)-x\Vert \leq \sigma,\quad \forall x\in X. 
	\end{equation}
\end{definition}

\section{Alternating (Pseudo-)Bisimulation Functions}\label{Sec:ABF}
In this section, we define notions of alternating pseudo-bisimulation and bisimulation functions for, respectively, dt-NCS and its symbolic abstraction (with disturbance signals) and two interconnected dt-NCS (without disturbance signals)~\cite{swikir2019compositional}.

\begin{definition}\label{Def:11}
	Consider a dt-NCS $\Xi =(X,U,D,f)$ as in Definition~\ref{dt-NCS} and its 
	symbolic abstraction $\hat\Xi =(\hat X,U,\hat D,\hat f)$ as in Definition~\ref{Symbolic_Abs}. A function $\mathcal S:X\times\hat X\to\R_0^+$ is an alternating pseudo-bisimulation function (APBF) between $\hat\Xi$ and $\Xi$, represented by $\hat\Xi\cong_{\mathcal{S}}\Xi$, if
	\begin{subequations}
		\begin{align}\label{EQ:51}
		&\forall x\in X, \forall \hat x\in\hat X\!: \quad\quad\quad\quad\quad\quad\quad\quad\quad\quad\quad\quad\quad\quad\quad\quad\quad\quad\gamma\Vert x - \hat x\Vert^2\leq \mathcal S(x,\hat x),\\\label{EQ:61}
		&\forall x\in X, \forall \hat x\in\hat X\!,  \forall\nu\in U\!, \forall d\in D, \forall\hat d\in\hat D\!: \quad\quad\mathcal S(f(x,\nu,d),\hat{f}(\hat x,\nu, \hat d))\leq\max\big\{\mu\mathcal S(x,\hat{x}), \eta\Vert d - \hat d\Vert^2,\theta\big\},
		\end{align}
	\end{subequations}
	for some $\gamma\in\R^+$, $0< \mu <1,$ and $\eta,\theta\in\R_0^+$.
\end{definition}
We now amend the above notion and present it as a relation between two interconnected dt-NCS by eliminating disturbance signals.

\begin{definition}\label{Def:1}
	Consider an interconnected dt-NCS $\Xi =(X,U,f)$ and its 
	symbolic abstraction $\hat\Xi =(\hat X,U,\hat f)$. A function $\mathcal V:X\times\hat X\to\R_0^+$ is an alternating bisimulation function (ABF) between $\hat\Xi$ and $\Xi$, denote by $\hat\Xi\cong_{\mathcal{V}}\Xi$, if
	\begin{subequations}
		\begin{align}\label{EQ:5}
		&\forall x\in X, \forall \hat x\in\hat X\!:\quad\quad\quad\quad\quad\quad\quad\quad\quad\quad\quad\quad\gamma\Vert x - \hat x\Vert^2\leq \mathcal V(x,\hat x),\\\label{EQ:6}
		&\forall x\in X, \forall \hat x\in\hat X\!, \forall\nu\in U\!:
		 \quad\quad\quad\quad\quad\mathcal V(f(x,\nu),\hat{f}(\hat x,\nu))\leq\max\big\{\mu\mathcal V(x,\hat{x}),\theta\big\},
		\end{align}
	\end{subequations}
	for some $\gamma\in\R^+$, $0< \mu <1,$ and $\theta \in\R_0^+$.
\end{definition}

The alternating bisimulation function in Definition~\ref{Def:1} implies that if the original dt-NCS and its symbolic abstraction commence from two close states (ensured by~\eqref{EQ:5}), then they stay close after a one-step evolution (ensured by~\eqref{EQ:6})~\cite{tabuada2009verification}.

In the next theorem, we leverage the usefulness of ABF and capture the distance between trajectories of an interconnected dt-NCS and its symbolic abstraction~\cite{swikir2019compositional}. 

\begin{theorem}\label{Thm:1}
	Given an interconnected dt-NCS $\Xi$ and its 
	symbolic abstraction $\hat\Xi$, let $\mathcal V$ be an ABF between $\hat\Xi$ and $\Xi$. Then a relation $\mathscr{R} \subseteq X \times \hat X$ as
	\begin{align}\label{Relation}
	\mathscr{R} := \Big\{(x,\hat x) \in X \times \hat X \,\big|\, \mathcal V(x,\hat{x}) \leq\theta\Big\}
	\end{align}
	is an $\tilde\epsilon$-approximate alternating bisimulation relation~\cite{tabuada2009verification} between $\hat\Xi$ and $\Xi$ with $\tilde\epsilon = (\frac{\theta}{\gamma})^{\frac{1}{2}}$.
\end{theorem}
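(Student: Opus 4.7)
The plan is to verify directly that the set $\mathscr{R}$ in~\eqref{Relation} satisfies the three conditions of an $\tilde\epsilon$-approximate alternating bisimulation relation (cf.~\cite{tabuada2009verification}): (i) a state-closeness condition bounding $\|x-\hat x\|$ by $\tilde\epsilon$ for every related pair, and (ii)--(iii) the two matching conditions for transitions of $\Xi$ and $\hat\Xi$ under a common input from $U$. Since both systems share the same finite input alphabet $U$ and the abstraction's dynamics are defined via $\hat f(\hat x,\nu)=\mathcal P(f(\hat x,\nu))$, the natural candidate for the ``matching'' input in each direction is the input itself, so no nontrivial input selection rule is needed.

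First, I would address the state-closeness condition. For any $(x,\hat x)\in\mathscr{R}$, inequality~\eqref{EQ:5} gives $\gamma\|x-\hat x\|^2\leq \mathcal V(x,\hat x)$, while the definition of $\mathscr{R}$ gives $\mathcal V(x,\hat x)\leq\theta$. Combining these yields $\|x-\hat x\|\leq (\theta/\gamma)^{1/2}=\tilde\epsilon$, so every related pair is within the required distance. This step uses $\gamma>0$ to take the square root and is where the explicit form $\tilde\epsilon=(\theta/\gamma)^{1/2}$ arises.

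Next, I would handle the two transition-matching conditions. Fix $(x,\hat x)\in\mathscr{R}$ and any $\nu\in U$; I pick the same $\nu$ as the matching input on the opposite side. Inequality~\eqref{EQ:6} gives
\begin{equation*}
\mathcal V\bigl(f(x,\nu),\hat f(\hat x,\nu)\bigr)\leq \max\bigl\{\mu\mathcal V(x,\hat x),\theta\bigr\}\leq \max\{\mu\theta,\theta\}=\theta,
\end{equation*}
where the second inequality uses $\mathcal V(x,\hat x)\leq\theta$ and the last equality uses $0<\mu<1$. Hence $\bigl(f(x,\nu),\hat f(\hat x,\nu)\bigr)\in\mathscr{R}$, which simultaneously establishes both the ``$\Xi$ to $\hat\Xi$'' and ``$\hat\Xi$ to $\Xi$'' simulation requirements, since in each case the proposed matching input is $\nu$ itself.

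There is no serious obstacle here: the ABF conditions are precisely engineered so that the sublevel set $\{\mathcal V\leq\theta\}$ is forward-invariant under the joint dynamics (because $\mu<1$ makes $\max\{\mu\mathcal V,\theta\}$ collapse to $\theta$ on this sublevel set), while the quadratic lower bound~\eqref{EQ:5} turns this sublevel invariance into a uniform state bound. The only point warranting care is to confirm that ``approximate alternating bisimulation'' in~\cite{tabuada2009verification} is taken with the common-input convention so that the matching input can be chosen equal to $\nu$; given the definitions used in the paper this is the intended setting, so the argument above is complete.
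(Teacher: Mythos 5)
Your proposal is correct. The paper itself gives no proof of this theorem (it defers to the cited references), and your argument is exactly the standard one those references use: the lower bound~\eqref{EQ:5} together with $\mathcal V(x,\hat x)\leq\theta$ yields $\Vert x-\hat x\Vert\leq(\theta/\gamma)^{1/2}$, and the decrease condition~\eqref{EQ:6} with $0<\mu<1$ and $\theta\geq 0$ gives $\max\{\mu\mathcal V(x,\hat x),\theta\}\leq\theta$ on the sublevel set, so $\mathscr{R}$ is invariant under matched transitions with the same input $\nu$ (which suffices here since both systems are deterministic and share the input set $U$, collapsing the alternating quantifiers).
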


In the next sections, we first construct APBF from data between unknown subsystems and their symbolic abstractions. We then provide sufficient compositional conditions in Section~\ref{Sec:Com} using a small-gain approach to construct an ABF for an interconnected system via its data-driven APBF of subsystems. 

\section{Data-Driven APBF}\label{DD-ABF}

In our data-driven approach, we consider APBF as $\mathcal{S}(\varphi,x,\hat x)=\sum_{j=1}^{z} {\varphi}_jg_j(x,\hat x)$, where $g_j(x,\hat x)$ are basis functions and $\varphi=[{\varphi}_{1};\dots;\varphi_z] \in \mathbb{R}^z$ are unknown variables. By considering basis functions $g_j(x,\hat x)$ as monomials over $(x,\hat x)$, APBF will be polynomial-type. To enforce proposed conditions of APBF as~\eqref{EQ:51}-\eqref{EQ:61}, we cast them as the following robust optimization program (ROP):
\begin{align}\label{ROP}
&\text{ROP}\!:\!\left\{
\hspace{-1.5mm}\begin{array}{l}\min\limits_{[\mathcal G;\xi]} \quad\!\xi,\\
\, \text{s.t.} \quad  \,\max_j\big\{\mathcal H_j(x, \hat x,\nu,d,\hat d, \mathcal G)\big\}\leq \xi,  j\in\{1,2\}, \\ 
\quad\quad\quad\!\forall x\in X,\forall \hat x\in \hat X,\forall \nu\in U,\forall d\in D, \forall \hat d\in \hat D,\\
\quad\quad\quad\!\mathcal G = [\gamma;\tilde\mu;\tilde\eta;\tilde\theta;{\varphi}_{1};\dots;\varphi_z],\\
\quad\quad\quad\!\gamma\!\in\!\R^+, \tilde\mu \in (0,1), \tilde\eta,\tilde\theta\!\in\!\R_0^+, \xi\!\in\!\mathbb R,\end{array}\right.
\end{align}
where: 
\begin{align}\notag
\mathcal H_1& = \gamma\Vert x - \hat x\Vert^2 - \mathcal S(\varphi,x,\hat x),\\\label{EQ:11}
\mathcal H_2& = \mathcal S(\varphi,f(x,\nu,d),\hat{f}(\hat x,\nu,\hat d))- \tilde\mu\mathcal S(\varphi,x,\hat{x})- \tilde\eta\Vert d - \hat d\Vert^2-\tilde\theta.
\end{align}
One can readily verify that conditions~\eqref{EQ:51}-\eqref{EQ:61} in the construction of APBF are fulfilled if $\xi_{\mathcal R}^* \leq 0$, with $\xi_{\mathcal R}^*$ being an optimal value for ROP.

\begin{remark}
	Note that after solving ROP in~\eqref{ROP}, $\mu,\eta,\theta$ in the $\max$-form condition~\eqref{EQ:61} can be acquired based on $\tilde\mu,\tilde\eta,\tilde\theta$ in the implication-form in~\eqref{EQ:11} as $\mu = 1 - (1 - \psi)(1 - \tilde\mu),\eta = \frac{(1+\lambda)\tilde \eta}{(1 - \tilde\mu)\psi}, \theta = \frac{(1+\frac{1}{\lambda})\tilde \theta}{(1 - \tilde\mu)\psi}$, for any $0<\psi <1$ and $\lambda \in \mathbb{R}^+$.
\end{remark}\vspace{0.2cm}
The provided ROP in~\eqref{ROP} is not solvable due to appearing unknown maps $f,\hat f$ in $\mathcal H_2$. To resolve this issue, we collect $\mathcal Q$ independent-and-identically distributed (i.i.d.) samples within $X\times D$, denoted by $(\bar x_i,\bar d_i)^{\mathcal Q}_{i=1}$. Now we propose a scenario optimization program (SOP), with an optimal value $\xi_{\mathcal Q}^*$, associated to the original ROP:
\begin{align}\label{SOP}
&\text{SOP}\!:\!\left\{
\hspace{-1.5mm}\begin{array}{l}\min\limits_{[\mathcal G;\xi]} \,\,\,\,\,\xi,\\
\, \text{s.t.}  \quad \,\max_j\!\big\{\mathcal H_j(\bar x_i,\hat x,\nu,\bar d_i,\hat d,\mathcal G)\big\}\leq \xi, j\in\{1,2\}, \\
\quad \quad\quad \!\!\forall \bar x_i\in X,\forall \bar d_i\in D,\forall i\in \{1,\ldots,\mathcal Q\},\\
\quad\quad\quad\!\!\forall \hat x\in \hat X,\forall \hat d\in \hat D,\forall \nu\in U,\\
\quad\quad\quad\!\!\mathcal G = [\gamma;\tilde\mu;\tilde\eta;\tilde\theta;{\varphi}_{1};\dots;\varphi_z],\\
\quad\quad\quad\!\gamma\!\in\!\R^+, \tilde\mu \in (0,1), \tilde\eta,\tilde\theta\!\in\!\R_0^+,\xi\!\in\!\mathbb R.\end{array}\right.
\end{align}
One can now substitute unknown $f(\bar x_i,\nu, \bar d_i)$ in $\mathcal H_2$ by measuring one-step transition of dt-NCS starting from $\bar x_i$ under $\nu$ and $\bar d_i$. As for $\hat f(\hat x,\nu,\hat d)$ in $\mathcal H_2$, we first compute $f(\hat x,\nu,\hat d)$ by initializing the unknown model from $\hat x$ under $\nu$ and $\hat d$. Given a discretization parameter $\sigma$, we then compute $\hat f(\hat x,\nu,\hat d)$ as the \emph{nearest point} to $f(\hat x,\nu,\hat d)$ by fulfilling condition~\eqref{EQ:4}. This is the way that we \emph{construct data-driven symbolic abstractions} by including a discretization error that is captured via $\theta$ in~\eqref{EQ:61}.

\begin{remark}
	Given a bilinearity between unknown variables $\varphi$ and $\tilde\mu$ in condition $\mathcal H_{2}$, we consider $\tilde\mu$ in a discrete set as $\tilde\mu \in \{\tilde\mu_1,\dots,\tilde\mu_l\}$. The cardinality $l$ is then incorporated in computing the required number of data for solving SOP (cf.~\eqref{EQ:12}). 
\end{remark}

\section{Data-Driven Guarantee for APBF Construction}

In this section, via the next theorem, we construct an APBF between each unknown subsystem and its symbolic abstraction with a guaranteed probabilistic confidence by establishing a probabilistic bridge between optimal values of SOP and ROP~\cite{esfahani2014performance}.

\begin{theorem}\label{Thm:3}
	Given an unknown dt-NCS in~\eqref{EQ:2}, let $\mathcal H_1$ and $\mathcal H_2$ be Lipschitz continuous with respect to $x$ and $(x,d)$ with, respectively, Lipschitz constants $\mathscr{L}_{1}$, $\mathscr{L}_{{2_t}}$, for given $\tilde\mu_t$ where $ t\in\{1,\dots,l\}$, and any $\nu\in U$. Consider the $\text{SOP}$ in~\eqref{SOP} with $\xi^*_{\mathcal Q}$, $\mathcal G^* = [\gamma^*;\tilde\eta^*;\tilde\theta^*,{\varphi}^*_{1};\dots;\varphi^*_z]$, and	
	\begin{align}\label{EQ:12}
	\mathcal Q(\varepsilon_t,\beta):=\min\Big\{\mathcal Q\in\N \,\big|\sum_{t=1}^{l}\sum_{i=0}^{c-1}\binom{\mathcal Q}{i}\varepsilon^i_t(1-\varepsilon_t)^{\mathcal Q-i}\leq\beta\Big\},
	\end{align}
	where $\beta,\varepsilon_t\in [0,1]$ for any $t\in\{1,\dots,l\}$, with $c,l$ being, respectively, number of unknown variables in $\text{SOP}$, and cardinality of finite set of $\tilde\mu$. If 
	\begin{align}\label{Con1}
	\xi^*_{\mathcal Q}+\max_t\mathscr{L}_{\mathcal H_t}\varkappa^{-1}(\varepsilon_t) \leq 0,
	\end{align}	
	with $\mathscr{L}_{\mathcal H_t} = \max\{\mathscr{L}_{1}, \mathscr{L}_{{2_t}}\}$, and $\varkappa(s): \mathbb R^+_0\rightarrow [0,1]$ depending on the geometry of $X\times D$ and the sampling distribution, then the constructed $\mathcal S$ via data is an APBF between $\hat\Xi$ and $\Xi$ with a guaranteed confidence of $1-\beta$, \emph{i.e.,} $\PP^{\mathcal Q}\big\{\hat\Xi\cong_{\mathcal{S}}\Xi\big\}\ge 1-\beta.$
\end{theorem}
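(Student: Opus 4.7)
The plan is to invoke the scenario optimization framework of Esfahani \emph{et al.}~\cite{esfahani2014performance}, which establishes a probabilistic bridge between the optimal value $\xi^*_{\mathcal Q}$ of the SOP~\eqref{SOP} and that of the ROP~\eqref{ROP}, and then to convert the resulting in-probability feasibility into pointwise feasibility on $X\times D$ via the Lipschitz assumptions on $\mathcal H_1,\mathcal H_2$.

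First, I would fix the discrete parameter $\tilde\mu=\tilde\mu_t$ and apply the scenario convex program result. For $c$ decision variables and $\mathcal Q$ i.i.d.\ samples drawn from the sampling distribution on $X\times D$, with $\PP^{\mathcal Q}$-probability at least $1-\beta_t$, where $\beta_t$ satisfies the binomial tail inequality $\sum_{i=0}^{c-1}\binom{\mathcal Q}{i}\varepsilon_t^i(1-\varepsilon_t)^{\mathcal Q-i}\le\beta_t$, the optimizer $(\mathcal G^*,\xi^*_{\mathcal Q})$ satisfies
\begin{align*}
\PP\Big\{(x,d)\in X\times D \,\Big|\, \max_j\{\mathcal H_j(x,\hat x,\nu,d,\hat d,\mathcal G^*)\}>\xi^*_{\mathcal Q}\Big\}\le \varepsilon_t
\end{align*}
uniformly in $(\hat x,\hat d,\nu)$. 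A union bound over $t\in\{1,\dots,l\}$ with $\beta=\sum_t\beta_t$ then reproduces the combined sample-complexity formula~\eqref{EQ:12}.

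Next, I would exploit the geometric function $\varkappa$: by its definition in terms of the geometry of $X\times D$ and the sampling distribution, $\varkappa^{-1}(\varepsilon_t)$ is a radius such that every point $(x,d)\in X\times D$ is within Euclidean distance $\varkappa^{-1}(\varepsilon_t)$ of at least one of the $\mathcal Q$ samples $(\bar x_i,\bar d_i)$; otherwise an uncovered neighborhood would contribute probability mass exceeding $\varepsilon_t$, contradicting the scenario guarantee obtained above. Combining this covering property with the Lipschitz continuity of $\mathcal H_1$ in $x$ and of $\mathcal H_2$ in $(x,d)$, with constants $\mathscr L_1$ and $\mathscr L_{2_t}$, yields for each $(x,d)\in X\times D$ and $j\in\{1,2\}$,
\begin{align*}
\mathcal H_j(x,\hat x,\nu,d,\hat d,\mathcal G^*)\le \mathcal H_j(\bar x_i,\hat x,\nu,\bar d_i,\hat d,\mathcal G^*)+\mathscr L_{\mathcal H_t}\varkappa^{-1}(\varepsilon_t)\le \xi^*_{\mathcal Q}+\mathscr L_{\mathcal H_t}\varkappa^{-1}(\varepsilon_t),
\end{align*}
where $\mathscr L_{\mathcal H_t}=\max\{\mathscr L_1,\mathscr L_{2_t}\}$ and the second inequality uses the SOP feasibility of $(\mathcal G^*,\xi^*_{\mathcal Q})$ at each sample. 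Taking the maximum over $t$ and invoking~\eqref{Con1} forces $\mathcal H_j(x,\hat x,\nu,d,\hat d,\mathcal G^*)\le 0$ for all arguments, which is precisely~\eqref{EQ:51}-\eqref{EQ:61} and certifies $\mathcal S$ as an APBF, inheriting the uniform confidence $1-\beta$ from the scenario result.

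The main obstacle will be making the covering-radius interpretation of $\varkappa$ rigorous and showing that this \emph{deterministic} covering property holds \emph{simultaneously} on the high-probability event produced by the scenario approach, so that the $\varkappa^{-1}(\varepsilon_t)$ term does not consume additional confidence budget beyond $\beta$. A secondary subtlety concerns the discretized parameter $\tilde\mu\in\{\tilde\mu_1,\ldots,\tilde\mu_l\}$: each $\tilde\mu_t$ induces a distinct SOP with its own Lipschitz constant $\mathscr L_{2_t}$, so the union bound over these $l$ programs must align precisely with the factor $l$ appearing in the sum in~\eqref{EQ:12}, and the final violation bound must use the worst-case Lipschitz constant $\max_t\mathscr L_{\mathcal H_t}$ appearing in~\eqref{Con1}.
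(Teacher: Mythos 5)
Your overall strategy matches the paper's: both go through the scenario-optimization machinery of \cite{esfahani2014performance} and use the Lipschitz constants $\mathscr L_{\mathcal H_t}$ together with the geometric function $\varkappa$ to upgrade the sampled guarantee to a uniform one on $X\times D$. The paper simply invokes \cite[Theorem~4.3]{esfahani2014performance} to get $\PP^{\mathcal Q}\{0\le\xi^*_{\mathcal R}-\xi^*_{\mathcal Q}\le\max_t\bar\varepsilon_t\}\ge 1-\beta$ with $\bar\varepsilon_t=\mathscr L_{\mathcal H_t}\varkappa^{-1}(\varepsilon_t)$ (the Slater constant being $1$ for the min--max program \eqref{ROP}) and then reads off \eqref{Con1}; you instead attempt to re-derive that bridge from the basic scenario violation bound. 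Your bookkeeping for the union bound over $t\in\{1,\dots,l\}$ and its matching with \eqref{EQ:12}, as well as the final conclusion $\mathcal H_j\le\xi^*_{\mathcal Q}+\max_t\mathscr L_{\mathcal H_t}\varkappa^{-1}(\varepsilon_t)\le 0$, are consistent with the paper.

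There is, however, a genuine gap in the step that converts the probabilistic guarantee into the pointwise bound. You assert that $\varkappa^{-1}(\varepsilon_t)$ is a covering radius, i.e., that every $(x,d)\in X\times D$ lies within distance $\varkappa^{-1}(\varepsilon_t)$ of one of the $\mathcal Q$ samples, ``otherwise an uncovered neighborhood would contribute probability mass exceeding $\varepsilon_t$.'' This does not follow: the scenario theorem bounds the measure of the \emph{violation set} of the optimizer $(\mathcal G^*,\xi^*_{\mathcal Q})$, not the measure of the region left unvisited by the samples. A region can be entirely sample-free and still lie outside the violation set, so no contradiction arises; indeed, for any finite $\mathcal Q$ there is positive probability that the samples fail to form an $r$-net for any prescribed $r$, yet the scenario guarantee still holds. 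The correct mechanism --- the one underlying \cite[Theorem~4.3]{esfahani2014performance}, and the reason Lemma~\ref{Function_g} is stated as a lower bound $\varkappa(r)\le\PP[\mathbb B_r(x,d)]$ on the mass of \emph{any} ball centered in $X\times D$ --- applies the Lipschitz bound to the violation set itself: if some $(x,d)$ satisfied $\max_j\mathcal H_j(x,\hat x,\nu,d,\hat d,\mathcal G^*)>\xi^*_{\mathcal Q}+\mathscr L_{\mathcal H_t}r$, then the entire set $\mathbb B_r(x,d)\cap(X\times D)$ would be contained in the violation set, forcing its measure to be at least $\varkappa(r)>\varepsilon_t$ and contradicting the scenario bound. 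Replacing your covering argument with this violation-set argument closes the gap, dissolves the ``main obstacle'' you flag (no extra confidence budget is needed, since the reasoning is deterministic on the event where the violation probability is at most $\varepsilon_t$), and yields $\xi^*_{\mathcal R}\le\xi^*_{\mathcal Q}+\max_t\mathscr L_{\mathcal H_t}\varkappa^{-1}(\varepsilon_t)\le 0$, hence \eqref{EQ:51}--\eqref{EQ:61}.
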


\begin{proof}
	According to~\cite[Theorem 4.3]{esfahani2014performance}, one can quantify the closeness between optimal values of
	ROP and SOP as
	\begin{align}\label{EQ:12_1}
	\PP^{\mathcal Q} \Big\{0\leq\xi^*_{\mathcal R}-\xi^*_{\mathcal Q}\leq\max_t\bar\varepsilon_t\Big\}\geq 1-\beta,
	\end{align}
	with $$\mathcal {Q}\big(\varkappa(\frac{\bar\varepsilon_t}{\mathrm L_{\mathrm {SP}}\mathscr{L}_{\mathcal H_t}}),\beta\big),$$ where $\bar\varepsilon_t\in [0,1]$, $\varkappa(s): \mathbb R^+_0\rightarrow [0,1]$, and $\mathrm{L}_{\mathrm{SP}}$ is a Slater point which is considered here as $1$ given that the original ROP in~\eqref{ROP} is a $\min$-$\max$ optimization program~\cite[Remark 3.5]{esfahani2014performance}.
	
	\noindent
	From~\eqref{EQ:12_1}, one has $\xi^*_{\mathcal Q}\leq\xi^*_{\mathcal R}\leq\xi^*_{\mathcal Q}+\max_t\bar\varepsilon_t$ with a confidence of $1-\beta$. If $\xi^*_{\mathcal Q}+\max_t\bar\varepsilon_t \leq 0$, then $\xi^*_{\mathcal R} \leq 0$, implying that  conditions~\eqref{EQ:51}-\eqref{EQ:61} are satisfied and the constructed $\mathcal S$ from data is an APBF between $\hat\Xi$ and $\Xi$ with a confidence of at least $1-\beta$. Since $\varepsilon_t=\varkappa(\frac{\bar\varepsilon_t}{\mathscr{L}_{\mathcal H_t}})$ with $\mathrm{L}_{\mathrm{SP}} = 1$~\cite{esfahani2014performance}, one has $\bar\varepsilon_t = \mathscr{L}_{\mathcal H_t}\varkappa^{-1}(\varepsilon_t)$. Then one can recast condition $\xi^*_{\mathcal Q}+\max_t\bar\varepsilon_t \leq 0$ as $\xi^*_{\mathcal Q}+\max_t\mathscr{L}_{\mathcal H_t}\varkappa^{-1}(\varepsilon_t) \leq 0$, which completes the proof. \end{proof}

In the next lemma, we compute the function $\varkappa$ in~\eqref{Con1} when collecting data with a uniform sampling distribution from a hyper-rectangle uncertainty set. 

\begin{figure*}
	\begin{align}\notag
	&\mathscr{L}_{{2_t}} = \max\limits_{x\in X, d\in D}\Vert \begin{bmatrix}
	2((Ax + B\nu+ Ed) - \mathcal P(A\hat x+ B\nu+ E\hat d))^\top PA- 2\tilde\mu_t(x-\hat x)^\top P\\2((Ax + B\nu+ Ed) - \mathcal P(A\hat x+ B\nu+ E\hat d))^\top PE-2\eta(d-\hat d)^\top
	\end{bmatrix}\Vert\\\notag
	&\leq \max\limits_{x\in X, d\in D}\Big\{\Vert  2((Ax + B\nu+ Ed) - \mathcal P(A\hat x+ B\nu+ E\hat d))^\top PA\Vert +\Vert 2\tilde\mu_t(x-\hat x)^\top P\Vert\\\notag
	&~~~+ \Vert  2((Ax + B\nu+ Ed) - \mathcal P(A\hat x+ B\nu+ E\hat d))^\top PE\Vert +\Vert 2\eta(d-\hat d)^\top\Vert\Big\}\\\notag
	& \leq  \max\limits_{x\in X, d\in D} \Big\{2\Vert P \Vert\big(\Vert A \Vert(\Vert A x\Vert + \Vert B  \nu \Vert + \Vert Ed \Vert + \Vert \mathcal P(A\hat x + B \nu + E\hat d)\Vert)\\\notag
	&~~~+ \Vert D \Vert(\Vert A x\Vert + \Vert B  \nu \Vert + \Vert Ed \Vert + \Vert \mathcal P(A\hat x + B \nu + E\hat d)\Vert)+\tilde\mu_t(\Vert x \Vert + \Vert \hat x \Vert)\big)+2\eta(\Vert d\Vert + \Vert\hat d\Vert)\Big\}\\\notag
	&\leq\max\limits_{x\in X, d\in D} \Big\{2\Vert P \Vert\big(\Vert A \Vert(\Vert A x\Vert + \Vert B  \nu \Vert + \Vert Ed \Vert + \sigma + \Vert A\hat x + B \nu + E\hat d\Vert)\\\notag
	&~~~+ \Vert D \Vert(\Vert A x\Vert + \Vert B  \nu \Vert + \Vert Ed \Vert + \sigma + \Vert A\hat x + B \nu + E\hat d\Vert)+\tilde\mu_t(\Vert x \Vert + \Vert \hat x \Vert)\big)+ 2\eta(\Vert d\Vert + \Vert\hat d\Vert)\Big\}\\\label{EQ:100}
	& \leq 2\lambda_{\max}(P) (2\mathcal J_1^2 \varpi_1 \!+\! 2\mathcal J_1 \mathcal J_2 \varpi_2 \!+\! 2\mathcal J_1 \mathcal J_3 \varpi_3 \!+\! \mathcal J_1\sigma \!+\! 2\mathcal J_3^2 \varpi_3 \!+\! 2\mathcal J_2 \mathcal J_3 \varpi_2 \!+\! 2\mathcal J_1 \mathcal J_3 \varpi_1 \!+\! \mathcal J_3\sigma \!+\! 2\varpi_1\tilde\mu_t) \!+\! 2\tilde\eta\varpi_3.
	\end{align}
	\rule{\textwidth}{0.1pt}
\end{figure*}

\begin{lemma}\label{Function_g}
	The function $\varkappa$ in~\eqref{Con1} fulfills the following inequality~\cite[Proposition 3.8]{esfahani2014performance}: 
	\begin{align}\label{New}
	\varkappa(r) \leq \PP \big [\mathbb B_r(x,d)\big],\quad\quad \forall r\in\mathbb R^+_0, \forall (x,d) \in X\times D,
	\end{align}
	with $\mathbb B_r(a) \subset X\times D$ being an open ball with center $a$ and radius $r$. If one collects data from an $(n+p)$-dimensional \emph{hyper-rectangle} uncertainty set $X\times D$ with a \emph{uniform} distribution, then $\varkappa$ in~\eqref{New} is quantified as 
	\begin{align}\notag
	\varkappa(r) &=\frac{\text{Vol}(\mathbb B_r(x,d))}{2^{n+p}\text{Vol}(X\times D)} = \frac{\frac{\pi^{\frac{{n+p}}{2}}}{\Gamma(\frac{{n+p}}{2} + 1)}r^{n+p}}{2^{n+p}\text{Vol}(X\times D)}\\\label{g-function}
	& = \frac{\pi^{\frac{{n+p}}{2}}r^{n+p}}{2^{n+p}\Gamma(\frac{{n+p}}{2} + 1)\text{Vol}(X\times D)}, 
	\end{align}
	with $\text{Vol}(\cdot)$ and $\Gamma$ being volume set and Gamma function, respectively. For other types of sample distributions and uncertainty sets, the function $\varkappa$ can be computed according to~\cite{kanamori2012worst}.
\end{lemma}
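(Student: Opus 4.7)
The plan is to handle the two assertions of the lemma separately. The first assertion, $\varkappa(r)\leq \PP[\mathbb B_r(x,d)]$, is essentially definitional within the scenario optimization framework: $\varkappa$ is introduced in~\cite{esfahani2014performance} precisely as a uniform lower bound on the probability mass that the sampling distribution assigns to any open ball of radius $r$ whose center lies in the uncertainty set. Accordingly, I would obtain it by invoking \cite[Proposition~3.8]{esfahani2014performance} and unpacking the definition; no independent argument is needed here.

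For the second, quantitative assertion under uniform sampling on the hyper-rectangle $X\times D$, I would reduce the probability to a volume ratio. Since the density is constant and equal to $1/\text{Vol}(X\times D)$, we have $\PP[\mathbb B_r(x,d)] = \text{Vol}(\mathbb B_r(x,d)\cap (X\times D))/\text{Vol}(X\times D)$, so obtaining $\varkappa(r)$ amounts to taking the infimum of this ratio over all admissible centers $(x,d)\in X\times D$. The key geometric observation is that the numerator is minimized when the center is placed at a vertex of the hyper-rectangle: each of the $n+p$ axis-aligned half-space constraints that are active at the vertex cuts the ball symmetrically through its center, contributing an independent factor of $1/2$, so the fraction of $\mathbb B_r$ remaining inside $X\times D$ is exactly $(1/2)^{n+p} = 1/2^{n+p}$, provided $r$ is small enough that the ball does not protrude past the opposite face along any coordinate axis.

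Substituting the standard formula $\text{Vol}(\mathbb B_r) = \pi^{(n+p)/2} r^{n+p}/\Gamma((n+p)/2 + 1)$ for the volume of an $(n+p)$-dimensional Euclidean ball then yields the displayed closed-form expression for $\varkappa(r)$. The main obstacle, and the part I would write out most carefully, is justifying the vertex-as-worst-case step rigorously; the cleanest route is an iterated one-dimensional reduction showing that translating the center toward any bounding hyperplane can only decrease the intersection volume, so successive projections onto each face bring us to a vertex as the minimizer. The small-$r$ regime under which the $1/2^{n+p}$ factor is exact should also be recorded explicitly; beyond this, the argument collapses to a routine volume computation.
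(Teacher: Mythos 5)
The paper offers no proof of this lemma: inequality~\eqref{New} is delegated entirely to \cite[Proposition~3.8]{esfahani2014performance}, and the closed form~\eqref{g-function} is asserted without derivation, so there is no in-paper argument to compare against. Your reconstruction is correct and is the standard one: the first claim is definitional ($\varkappa$ is precisely the uniform lower bound on the mass assigned to radius-$r$ balls centered in the uncertainty set), and the quantitative claim follows from writing $\PP[\mathbb B_r(x,d)]$ as the volume ratio $\text{Vol}(\mathbb B_r(x,d)\cap(X\times D))/\text{Vol}(X\times D)$ and identifying a vertex of the hyper-rectangle as the worst-case center, where exactly $2^{-(n+p)}$ of the ball survives by coordinate-reflection symmetry; your iterated one-dimensional slice argument is the right way to make the vertex-minimizer step rigorous. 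Two points you correctly sense but should record explicitly, since the paper states~\eqref{g-function} as an unqualified equality: (i) the factor $2^{-(n+p)}$ is exact only when $r$ does not exceed the shortest side length of $X\times D$ (otherwise the ball at a vertex is clipped further and equality becomes an inequality in the conservative direction); and (ii) the paper declares $\Vert\cdot\Vert$ to be the infinity norm, under which $\mathbb B_r$ would be a cube of volume $(2r)^{n+p}$ rather than the Euclidean ball whose volume formula appears in~\eqref{g-function} --- you have followed the paper's Euclidean-ball reading, which is internally consistent with the displayed formula but should be reconciled with the norm used for the Lipschitz constants entering~\eqref{Con1}.
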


To check the proposed condition in~\eqref{Con1}, $\mathscr{L}_{\mathcal H_t}$ is required. In the next lemmas, we compute $\mathscr{L}_{\mathcal H_t}$ for both linear and nonlinear control systems.

\begin{lemma}\label{Lem:1}
	Given a linear system $x(k+1)=Ax(k) + B\nu(k) + Ed(k)$, let $(x-\hat x)^\top P(x-\hat x)$ be an APBF with a positive-definite matrix $P\in\mathbb{R}^{n\times n}$. Then $\mathscr{L}_{\mathcal H_t}$ is computed as $\mathscr{L}_{\mathcal H_t} = \max\! \big\{\mathscr{L}_{1}, \mathscr{L}_{{2_t}}\big\}$ with 
	\begin{align*}
	\mathscr{L}_{1} &= 4\varpi_1 (\lambda_{\min}(P) + \lambda_{\max}(P)),\\
	\mathscr{L}_{{2_t}} &= 2\lambda_{\max}(P) \big(2\mathcal J_1^2 \varpi_1 + 2\mathcal J_1 \mathcal J_2 \varpi_2 + 2\mathcal J_1 \mathcal J_3 \varpi_3 + \mathcal J_1\sigma\\
	& ~~~~~~+ 2\mathcal J_3^2 \varpi_3 + 2\mathcal J_2 \mathcal J_3 \varpi_2 + 2\mathcal J_1 \mathcal J_3 \varpi_1 + \mathcal J_3\sigma + 2\varpi_1\tilde\mu_t\big) + 2\tilde\eta\varpi_3,
	\end{align*}
	where $\Vert A\Vert \leq \mathcal J_1$, $\Vert B\Vert \leq \mathcal J_2$,  $\Vert E\Vert \leq \mathcal J_3$, $\Vert x\Vert \leq \varpi_1$ for any $x\in X$, $\Vert \nu\Vert \leq \varpi_2$ for any $\nu\in U$, and $\Vert d\Vert \leq \varpi_3$ for any $d\in D$.
\end{lemma}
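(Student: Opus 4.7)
The plan is to bound $\mathscr{L}_{\mathcal{H}_t} = \max\{\mathscr{L}_{1},\mathscr{L}_{2_t}\}$ by handling $\mathcal{H}_1$ and $\mathcal{H}_2$ separately. Both functions are polynomial in their arguments (since the APBF is the quadratic form $(x-\hat x)^\top P(x-\hat x)$ and the dynamics are linear), so they are continuously differentiable and a valid Lipschitz constant on the compact uncertainty set is provided by the supremum of the gradient norm.

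For $\mathscr{L}_{1}$, I would first rewrite $\mathcal{H}_1(x,\hat x)=(x-\hat x)^\top(\gamma I - P)(x-\hat x)$ and compute $\nabla_x\mathcal{H}_1 = 2(\gamma I - P)(x-\hat x)$. Applying the triangle inequality and sub-multiplicativity of the operator norm gives $\|\nabla_x\mathcal{H}_1\|\leq 2\gamma\|x-\hat x\|+2\|P\|\|x-\hat x\|$. Using $\|x\|,\|\hat x\|\leq\varpi_1$ yields $\|x-\hat x\|\leq 2\varpi_1$. Since $P$ is symmetric positive definite, $\|P\|=\lambda_{\max}(P)$. Finally, the condition~\eqref{EQ:51} in the definition of APBF imposes $\gamma\|x-\hat x\|^2\leq (x-\hat x)^\top P(x-\hat x)$ for every $x,\hat x$, forcing $\gamma\leq\lambda_{\min}(P)$; combining all of these produces exactly $\mathscr{L}_{1}=4\varpi_1(\lambda_{\min}(P)+\lambda_{\max}(P))$.

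For $\mathscr{L}_{2_t}$, I would plug the linear dynamics $f(x,\nu,d)=Ax+B\nu+Ed$ and $\hat f(\hat x,\nu,\hat d)=\mathcal{P}(A\hat x+B\nu+E\hat d)$ into $\mathcal{H}_2$ and compute its partial derivatives with respect to $x$ and $d$ by the chain rule, obtaining $\nabla_x\mathcal{H}_2 = 2((Ax+B\nu+Ed)-\mathcal{P}(A\hat x+B\nu+E\hat d))^\top P A - 2\tilde\mu_t(x-\hat x)^\top P$ and $\nabla_d\mathcal{H}_2 = 2((Ax+B\nu+Ed)-\mathcal{P}(A\hat x+B\nu+E\hat d))^\top P E - 2\tilde\eta(d-\hat d)^\top$. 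Stacking these into the joint gradient and following the chain of inequalities shown in the displayed equation~\eqref{EQ:100}, I would repeatedly apply the triangle inequality, use the operator-norm bounds $\|A\|\leq\mathcal{J}_1$, $\|B\|\leq\mathcal{J}_2$, $\|E\|\leq\mathcal{J}_3$, and control the quantization term through $\|\mathcal{P}(y)\|\leq\|y\|+\sigma$, which follows from~\eqref{EQ:4}. The compactness bounds $\|x\|,\|\hat x\|\leq\varpi_1$, $\|\nu\|\leq\varpi_2$, $\|d\|,\|\hat d\|\leq\varpi_3$ then permit each cross-term to be replaced by its worst-case numerical value, and gathering the resulting terms reproduces the stated closed-form expression for $\mathscr{L}_{2_t}$.

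The main obstacle is not conceptual but bookkeeping: tracking the cross-terms $\mathcal{J}_1^2\varpi_1$, $\mathcal{J}_1\mathcal{J}_2\varpi_2$, $\mathcal{J}_1\mathcal{J}_3\varpi_3$, $\mathcal{J}_3^2\varpi_3$, $\mathcal{J}_2\mathcal{J}_3\varpi_2$, $\mathcal{J}_1\mathcal{J}_3\varpi_1$, and the $\sigma$-contributions from the quantization map, and making sure they assemble exactly as written. Once both bounds are in hand, taking $\max\{\mathscr{L}_{1},\mathscr{L}_{2_t}\}$ completes the proof.
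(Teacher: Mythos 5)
Your proposal is correct and follows essentially the same route as the paper: both compute $\mathscr{L}_1$ and $\mathscr{L}_{2_t}$ as suprema of the gradient norms of $\mathcal H_1$ and $\mathcal H_2$ over the compact set, identify $\gamma$ with $\lambda_{\min}(P)$ via condition~\eqref{EQ:51}, and obtain $\mathscr{L}_{2_t}$ by the same stacked gradient, triangle inequality, operator-norm bounds, and the quantization estimate $\Vert\mathcal P(y)\Vert\leq\Vert y\Vert+\sigma$ as in the chain~\eqref{EQ:100}. The only difference is cosmetic (you factor $\mathcal H_1$ as a quadratic form in $\gamma I-P$ rather than differentiating the two terms separately), which changes nothing.
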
	

\begin{proof}
	We first compute $\mathscr{L}_{1}$ and $\mathscr{L}_{{2_t}}$, and then take the maximum between them. By defining
	\begin{align}\notag
	\mathscr{L}_{{2_t}}\!:\left\{
	\hspace{-0.5mm}\begin{array}{l}\max\limits_{x\in X, d\in D}\Vert\frac{\partial \mathcal H_{2}}{\partial (x,d)}\Vert\\
	\,\,\,\,\,\,\, \text{s.t.} \quad \quad\!\!\Vert x\Vert \leq \varpi_1, \Vert d\Vert \leq \varpi_3,\end{array}\right.
	\end{align}
	one can reach the chain of inequalities in~\eqref{EQ:100}. For computing $\mathscr{L}_{1}$, since $\lambda_{\min}(P)\Vert x-\hat x \Vert^2 \leq (x-\hat x)^\top P(x-\hat x)$, one has $\gamma = \lambda_{\min}(P)$ in~\eqref{EQ:11}. Then we have
	\begin{align*} 
	\mathscr{L}_{1} &=\max\limits_{x\in X, \Vert x\Vert \leq \varpi_1}\Vert 2\lambda_{\min}(P) (x - \hat x) - 2P(x-\hat x)\Vert\\
	&\leq 4\varpi_1 (\lambda_{\min}(P) + \lambda_{\max}(P)).
	\end{align*}
	Then $\mathscr{L}_{\mathcal H_t} \!=\! \max\! \big\{\mathscr{L}_{1}, \mathscr{L}_{{2_t}}\big\} \!=\! \max\!\big\{4\varpi_1 (\lambda_{\min}(P) + \lambda_{\max}(P)),2\lambda_{\max}(P) (2\mathcal J_1^2 \varpi_1 + 2\mathcal J_1 \mathcal J_2 \varpi_2 + 2\mathcal J_1 \mathcal J_3 \varpi_3 + \mathcal J_1\sigma + 2\mathcal J_3^2 \varpi_3 + 2\mathcal J_2 \mathcal J_3 \varpi_2 + 2\mathcal J_1 \mathcal J_3 \varpi_1 + \mathcal J_3\sigma + 2\varpi_1\tilde\mu_t) + 2\tilde\eta\varpi_3\big\}$, which completes the proof.
\end{proof}

We now compute $\mathscr{L}_{\mathcal H_t}$ for \emph{nonlinear} control systems.

\begin{lemma}\label{Lem:1_1}
	Given a dt-NCS as in~\eqref{EQ:2}, let $(x-\hat x)^\top P(x-\hat x)$ be an APBF with a positive-definite matrix $P\in\mathbb{R}^{n\times n}$. Then $\mathscr{L}_{\mathcal H_t}$ is acquired as $\mathscr{L}_{\mathcal H_t} = \max\! \big\{\mathscr{L}_{1}, \mathscr{L}_{{2_t}}\big\}$ with 
	\begin{align*}
	\mathscr{L}_{1} &= 4\varpi_1 (\lambda_{\min}(P) + \lambda_{\max}(P)),\\
	\mathscr{L}_{{2_t}} &= 2\lambda_{\max}(P) (2\mathcal J_f \mathcal J_x + \mathcal J_x\sigma + 2\mathcal J_f \mathcal J_d + \mathcal J_d\sigma + 2\varpi_1\tilde\mu_t) + 2\tilde\eta\varpi_3,
	\end{align*}
	where $\Vert f(x,\nu,d)\Vert \leq \mathcal J_f$, $\Vert \partial_{x}f(x,\nu,d) \Vert \leq \mathcal J_{x}$, $\Vert \partial_{d}f(x,\nu,d) \Vert \leq \mathcal J_{d}$, $\Vert x\Vert \leq \varpi_1$ for any $x\in X$, and $\Vert d\Vert \leq \varpi_3$ for any $d\in D$.
\end{lemma}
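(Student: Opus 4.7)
The plan is to imitate the proof of Lemma~\ref{Lem:1} step by step, with the linear structure $Ax+B\nu+Ed$ and the matrix norms $\|A\|,\|B\|,\|E\|$ replaced by the generic bounds $\mathcal J_f, \mathcal J_x, \mathcal J_d$ on $f$ and its Jacobians. Since $\mathcal H_1 = \gamma\|x-\hat x\|^2 - (x-\hat x)^\top P(x-\hat x)$ does not involve the dynamics, the computation of $\mathscr{L}_1$ is verbatim the same as in the linear case: setting $\gamma=\lambda_{\min}(P)$ via $\lambda_{\min}(P)\|x-\hat x\|^2\leq (x-\hat x)^\top P(x-\hat x)$, differentiating in $x$ to get $2\gamma(x-\hat x)-2P(x-\hat x)$, and invoking $\|x\|,\|\hat x\|\leq \varpi_1$ together with the triangle inequality, I would recover $\mathscr{L}_1\leq 4\varpi_1(\lambda_{\min}(P)+\lambda_{\max}(P))$.

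For $\mathscr{L}_{2_t}$, I would substitute $\hat f(\hat x,\nu,\hat d)=\mathcal P(f(\hat x,\nu,\hat d))$ into $\mathcal H_2$ and compute $\partial \mathcal H_2/\partial(x,d)$ by the chain rule. The $x$-partial reads $2(f(x,\nu,d)-\mathcal P(f(\hat x,\nu,\hat d)))^\top P\,\partial_x f(x,\nu,d) - 2\tilde\mu_t(x-\hat x)^\top P$, and the $d$-partial is analogous with $\partial_d f$ in place of $\partial_x f$, plus the extra term $-2\tilde\eta(d-\hat d)^\top$. I would bound each factor using $\|\partial_x f\|\leq \mathcal J_x$, $\|\partial_d f\|\leq \mathcal J_d$, the cross-estimate $\|f(x,\nu,d)-\mathcal P(f(\hat x,\nu,\hat d))\|\leq \|f(x,\nu,d)\|+\|f(\hat x,\nu,\hat d)\|+\sigma\leq 2\mathcal J_f+\sigma$ that pulls in the discretization bound~\eqref{EQ:4}, and the domain bounds on $x$ and $d$. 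Collecting the resulting terms and using $\|P\|=\lambda_{\max}(P)$ reproduces the stated expression for $\mathscr{L}_{2_t}$, after which taking $\mathscr{L}_{\mathcal H_t}=\max\{\mathscr{L}_1,\mathscr{L}_{2_t}\}$ concludes the argument.

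The only delicate point is tracking the quantization map $\mathcal P$: it is a constant with respect to $(x,d)$ and therefore does not directly contribute to the Jacobian, but its norm can only be controlled up to an additive $\sigma$, and it is precisely this $\sigma$ that produces the $\mathcal J_x\sigma$ and $\mathcal J_d\sigma$ terms in the final bound. Everything else is a direct nonlinear analogue of the chain of inequalities displayed in~\eqref{EQ:100}, with products such as $\|A\|\|Ax\|=\mathcal J_1^2\varpi_1$ replaced by $\mathcal J_f\mathcal J_x$ and $\|E\|\|Ed\|=\mathcal J_3^2\varpi_3$ replaced by $\mathcal J_f\mathcal J_d$, so no new conceptual ingredient is needed beyond the Lipschitz/Jacobian hypotheses stated in the lemma.
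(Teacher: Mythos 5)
Your proposal matches the paper's proof essentially line for line: the paper likewise computes $\mathscr{L}_{2_t}$ as $\max_{x\in X, d\in D}\Vert \partial\mathcal H_2/\partial(x,d)\Vert$ with exactly the partials you write, bounds $\Vert\mathcal P(f(\hat x,\nu,\hat d))\Vert$ by $\sigma+\Vert f(\hat x,\nu,\hat d)\Vert$ via~\eqref{EQ:4} to obtain the $\mathcal J_x\sigma$ and $\mathcal J_d\sigma$ terms, and treats $\mathscr{L}_1$ verbatim as in Lemma~\ref{Lem:1}. The argument is correct and takes the same route, so no further comment is needed.
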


\begin{proof}
	By defining
	\begin{align}\notag
	\mathscr{L}_{{2_t}}\!:\left\{
	\hspace{-0.5mm}\begin{array}{l}\max\limits_{x\in X, d\in D}\Vert\frac{\partial \mathcal H_{2}}{\partial (x,d)}\Vert\\
	\,\,\,\,\,\,\, \text{s.t.} \quad \quad\!\!\Vert x\Vert \leq \varpi_1, \Vert d\Vert \leq \varpi_3,\end{array}\right.
	\end{align}
	one can acquire the chain of inequalities in~\eqref{EQ:101}.
	For $\mathcal H_1$:
	\begin{align*} 
	\mathscr{L}_{1} &=\max\limits_{x\in X, \Vert x\Vert \leq \varpi_1}\Vert 2\lambda_{\min}(P) (x - \hat x) - 2P(x-\hat x)\Vert\\
	&\leq 4\varpi_1 (\lambda_{\min}(P) + \lambda_{\max}(P)).
	\end{align*}
	Then $\mathscr{L}_{\mathcal H_t} = \max \big\{\mathscr{L}_{1}, \mathscr{L}_{{2_t}}\big\} = \max\big\{4\varpi_1 (\lambda_{\min}(P) + \lambda_{\max}(P)),2\lambda_{\max}(P) (2\mathcal J_f \mathcal J_x + \mathcal J_x\sigma + 2\mathcal J_f \mathcal J_d + \mathcal J_d\sigma + 2\varpi_1\tilde\mu_t) + 2\tilde\eta\varpi_3\big\}$, which concludes the proof.
\end{proof}

\begin{remark}
	For the computation of $\mathscr{L}_{\mathcal H_t}$ in Lemmas~\ref{Lem:1}, \ref{Lem:1_1}, the required information is Lipschitz constant of dynamics together with an upper bound over unknown models. One can estimate the Lipschitz constant of dynamics using data based on the proposed approach in~\cite{wood1996estimation}. One can also compute an upper bound on unknown models based on the range of the state set.
\end{remark}

\begin{figure*}
	\begin{align}\notag
	&\mathscr{L}_{{2_t}} = \max\limits_{x\in X, d\in D}\Vert \begin{bmatrix}
	2(f(x,\nu,d) - \mathcal P(f(\hat x,\nu,\hat d)))^\top P\partial_{x}f(x,\nu,d)- 2\tilde\mu_t(x-\hat x)^\top P\\2(f(x,\nu,d) - \mathcal P(f(\hat x,\nu,\hat d)))^\top P\partial_{d}f(x,\nu,d)-2\eta(d-\hat d)^\top
	\end{bmatrix}\Vert\\\notag
	& \leq  \max\limits_{x\in X, d\in D} \Big\{2\Vert P \Vert\big(\Vert \partial_{x}f(x,\nu,d) \Vert(\Vert f(x,\nu,d) \Vert + \Vert \mathcal P(f(\hat x,\nu,\hat d))\Vert) + \Vert \partial_{d}f(x,\nu,d) \Vert(\Vert f(x,\nu,d) \Vert + \Vert \mathcal P(f(\hat x,\nu,\hat d))\Vert)\\\notag
	&~~~+\tilde\mu_t(\Vert x \Vert + \Vert \hat x \Vert)\big)+2\eta(\Vert d\Vert + \Vert\hat d\Vert)\Big\}\\\notag
	&\leq\max\limits_{x\in X, d\in D} \Big\{2\Vert P \Vert\big(\Vert \partial_{x}f(x,\nu,d) \Vert(\Vert f(x,\nu,d) \Vert + \sigma + \Vert f(\hat x,\nu,\hat d)\Vert) + \Vert \partial_{d}f(x,\nu,d) \Vert(\Vert f(x,\nu,d) \Vert + \sigma + \Vert f(\hat x,\nu,\hat d)\Vert)\\\notag
	&~~~+\tilde\mu_t(\Vert x \Vert + \Vert \hat x \Vert)\big)+ 2\eta(\Vert d\Vert + \Vert\hat d\Vert)\Big\}\\\label{EQ:101}
	& \leq 2\lambda_{\max}(P) (2\mathcal J_f \mathcal J_x + \mathcal J_x\sigma + 2\mathcal J_f \mathcal J_d + \mathcal J_d\sigma + 2\varpi_1\tilde\mu_t) + 2\tilde\eta\varpi_3.
	\end{align}
	\rule{\textwidth}{0.1pt}
\end{figure*}

\section{Compositional Construction of ABF for Interconnected dt-NCS}\label{Sec:Com}
Here, we provide a compositional approach to construct an ABF for an interconnected dt-NCS using its corresponding data-driven APBF of subsystems. To do so, we first raise the following $\max$-type small-gain assumption.

\begin{assumption}\label{Assump: Gamma}
	Let $\mu_{ij}\in\mathbb{R}^+$ defined as
	\begin{equation*}
	\mu_{ij} := 
	\begin{cases}
	\mu_i ~~~~& \text{if }i = j,\\
	\frac{\eta_i}{\gamma_j} ~~~~& \text{if }i \neq j,
	\end{cases}
	\end{equation*}
	satisfy
	\begin{equation}\label{Assump: small}
	\mu_{i_1i_2}.\mu_{i_2i_3}.\dots \mu_{i_{q-1}i_{q}}.\mu_{i_{q}i_1} < 1
	\end{equation}	
	for all sequences $(i_1,\dots,i_{q}) \in \{1,\dots,\mathcal M\}^ {q}$ and $q\in \{1,\dots,\mathcal M\}$. 
	
	Condition~\eqref{Assump: small} is called \emph{circularity condition} and implies the existence of $\kappa_i \in\mathbb{R}^+$  fulfilling~\cite{ruffer2010monotone}
	\begin{align}\label{compositionality}
	\max_{i,j}\Big\{\frac{\mu_{ij}\kappa_j}{\kappa_i}\Big\} < 1, ~~~~i,j = \{1,\dots,\mathcal M\}.
	\end{align}
\end{assumption}	
In the next theorem, we employ Assumption \ref{Assump: Gamma} to construct an ABF for an interconnected dt-NCS based on data-driven APBF of subsystems as in Theorem~\ref{Thm:3}.

\begin{figure*}
	\begin{align}\nonumber
	\mathcal V(\varphi,f(x,\nu),\hat{f}(\hat x, \nu))&=\max_{i}\{\frac{1}{\kappa_i}\mathcal S_{i}(\varphi_i,f_i(x_i,\nu_i,d_i),\hat{f}_{i}(\hat x_{i}, \nu_{i},\hat d_{i}))\}\\\notag
	&\leq\max_{i}\frac{1}{\kappa_i}\Big\{\max\{\mu_{i}\mathcal S_{i}(\varphi_i,x_i,\hat x_{i}),\eta_i\Vert d_i-\hat d_{i}\Vert^2,\theta_{i}\}\Big\}\\\notag
	&=\max_{i}\frac{1}{\kappa_i}\Big\{\max\{\mu_{i}\mathcal S_{i}(\varphi_i,x_i,\hat x_{i}),\eta_i\max_{j, j\neq i}\{\Vert d_{ij}-\hat d_{ij}\Vert^2\},\theta_{i}\}\Big\}\\\notag
	&=\max_{i}\frac{1}{\kappa_i}\Big\{\max\{\mu_{i}\mathcal S_{i}(\varphi_i,x_i,\hat x_{i}),\eta_i\max_{j}\{\Vert x_{j}-\hat x_{j}\Vert^2\},\theta_{i}\}\Big\}\\\notag
	&\leq \max_{i}\frac{1}{\kappa_i}\Big\{\max\{\mu_{i}\mathcal S_{i}(\varphi_i,x_i,\hat x_{i}),\eta_i\max_{j}\{\frac{\mathcal S_{j}(\varphi_j,x_j, \hat x_{j})}{\gamma_{j}}\},\theta_{i}\}\Big\}\\\notag
	&=\max_{i,j}\frac{1}{\kappa_i}\Big\{\max\{\mu_{ij}\mathcal S_{j}(\varphi_j,x_j,\hat x_{j}),\theta_{i}\}\Big\}=\max_{i,j}\frac{1}{\kappa_i}\Big\{\max\{\mu_{ij} \kappa_{j} \kappa_{j}^{-1}\mathcal S_{j}(\varphi_j,x_j,\hat x_{j}),\theta_{i}\}\Big\}\\\notag
	&\leq\max_{i,j,z}\frac{1}{\kappa_i}\Big\{\max\{\mu_{ij} \kappa_{j} \kappa_{z}^{-1}\mathcal S_{z}(\varphi_z,x_z,\hat x_{z}),\theta_{i}\Big\}=\max_{i,j}\frac{1}{\kappa_i}\Big\{\max\{\mu_{ij}\kappa_{j}\mathcal V(\varphi,x,\hat x),\theta_{i}\}\Big\}\\\label{Equ1b}
	&=\max\{\mu\mathcal V(\varphi,x,\hat x),\theta\}.\ 
	\end{align}
	\rule{\textwidth}{0.1pt}
\end{figure*}

\begin{theorem}\label{Thm: Comp}
	Consider an interconnected dt-NCS
	$\Xi=\mathcal{I}(\Xi_1,\ldots, \Xi_{\mathcal M})$ induced by $\mathcal M\in\mathbb N^+$ subsystems~$\Xi_i$. Suppose there exists an APBF between each subsystem $\Xi_i$ and its symbolic abstraction $\hat\Xi_i$ with a confidence of $1-\beta_i$, according to Theorem~\ref{Thm:3}. If Assumption~\ref{Assump: Gamma} is met,
	then
	\begin{equation}\label{Comp: ABF}
	\mathcal V(\varphi,x,\hat x) := \max_{i} \{\frac{1}{\kappa_i} \mathcal S_i(\varphi_i,x_i,\hat x_i)\}
	\end{equation}
	for $\kappa_i$ as in \eqref{compositionality}, is an ABF between $\hat \Xi=\mathcal{I}(\hat\Xi_1,\ldots, \hat\Xi_{\mathcal M})$ and $\Xi=\mathcal{I}(\Xi_1,\ldots, \Xi_{\mathcal M})$ with a confidence of $1-\sum_{i=1}^{\mathcal M}\beta_{i}$. 	
\end{theorem}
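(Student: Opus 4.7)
The plan is to verify the two defining inequalities of an ABF in Definition~\ref{Def:1} for the candidate $\mathcal V$ in~\eqref{Comp: ABF}, using the APBF properties of the subsystems together with the small-gain hypothesis, and then combine the per-subsystem probabilistic guarantees by a union bound. The lower-bound condition~\eqref{EQ:5} is straightforward: starting from the APBF inequality $\gamma_i\|x_i-\hat x_i\|^2\leq \mathcal S_i(\varphi_i,x_i,\hat x_i)$ for each $i$, dividing by $\kappa_i$, and taking the maximum over $i$, the definition of the infinity norm on the stacked vector $x-\hat x$ yields $\gamma\|x-\hat x\|^2\leq \mathcal V(\varphi,x,\hat x)$ with $\gamma:=\min_i \gamma_i/\kappa_i$.

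For the decay condition~\eqref{EQ:6}, the strategy is the chain of inequalities naturally suggested by~\eqref{Equ1b}. The key steps are: (i) apply the APBF decay~\eqref{EQ:61} subsystem by subsystem; (ii) rewrite $\|d_i-\hat d_i\|^2$ as $\max_{j\neq i}\|d_{ij}-\hat d_{ij}\|^2$ under the infinity norm; (iii) substitute the interconnection relation $d_{ij}=x_j$, $\hat d_{ij}=\hat x_j$ from Definition~\ref{Def:2}; (iv) upper-bound $\|x_j-\hat x_j\|^2$ by $\mathcal S_j/\gamma_j$ via~\eqref{EQ:51}; and (v) collapse the diagonal $\mu_i$ and off-diagonal $\eta_i/\gamma_j$ coefficients into a single $\mu_{ij}$ as in Assumption~\ref{Assump: Gamma}. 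Finally, bounding $\kappa_j^{-1}\mathcal S_j \leq \mathcal V$ for any $j$ and invoking the circularity condition via~\eqref{compositionality} gives $\mathcal V(\varphi,f(x,\nu),\hat f(\hat x,\nu))\leq \max\{\mu\mathcal V(\varphi,x,\hat x),\theta\}$ with $\mu:=\max_{i,j}\mu_{ij}\kappa_j/\kappa_i<1$ and $\theta:=\max_i \theta_i/\kappa_i$.

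For the probabilistic claim, let $E_i$ denote the event that $\mathcal S_i$ is a valid APBF for subsystem $i$, for which Theorem~\ref{Thm:3} guarantees $\PP^{\mathcal Q}(E_i)\geq 1-\beta_i$. Since the deterministic argument above produces an ABF on the intersection $\bigcap_{i=1}^{\mathcal M} E_i$, Boole's inequality applied to the complementary failure events yields $\PP^{\mathcal Q}\{\hat\Xi\cong_{\mathcal V}\Xi\}\geq 1-\sum_{i=1}^{\mathcal M}\beta_i$. The main obstacle is the index-bookkeeping in steps (iv)--(v): after replacing $\|x_j-\hat x_j\|^2$ by $\mathcal S_j/\gamma_j$, one must enlarge the outer maximum to an auxiliary index $z$ so that $\kappa_z^{-1}\mathcal S_z\leq \mathcal V$ can be factored out uniformly, and the circularity condition~\eqref{compositionality} is precisely what keeps the resulting composite gain $\mu$ strictly inside $(0,1)$. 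No additional computation beyond these symbolic manipulations is required.
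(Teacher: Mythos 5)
Your proposal is correct and follows essentially the same route as the paper's proof: the same lower-bound argument giving $\gamma=\min_i\{\gamma_i/\kappa_i\}$, the same chain of inequalities for the decay condition (including the auxiliary index $z$ needed to factor out $\kappa_z^{-1}\mathcal S_z\leq\mathcal V$) yielding $\mu=\max_{i,j}\{\mu_{ij}\kappa_j/\kappa_i\}<1$ and $\theta=\max_i\{\theta_i/\kappa_i\}$, and the same union bound over the complementary failure events for the confidence $1-\sum_{i=1}^{\mathcal M}\beta_i$. No gaps.
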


\begin{proof}
	We first show that for some $\gamma\in\R^+$, ABF $\mathcal V$ in~\eqref{Comp: ABF} satisfies condition~\eqref{EQ:5}. For all $x=[{x_1;\ldots;x_{\mathcal M}}]\in X$ and  $\hat x=[{\hat x_1;\ldots;\hat x_{\mathcal M}}]\in \hat X$, we have
	\begin{align}\notag
	\Vert x - \hat x \Vert^2&= \max_i \{\Vert x_i - \hat x_{i} \Vert^2\}\le\max_i \{\frac{\mathcal S_{i}(\varphi_i,x_i, \hat x_{i})}{\gamma_{i}}\}\\\notag
	&\le\bar\gamma\max_i \{\frac{\mathcal S_{i}(\varphi_i,x_i, \hat x_{i})}{\kappa_i}\}=
	\bar\gamma \mathcal V(\varphi,x, \hat x),
	\end{align}
	where $\bar\gamma=\max_i\big\{\frac{\kappa_{i}}{\gamma_i}\big\}$,  and condition~\eqref{EQ:5} holds with $\gamma=\frac{1}{\bar\gamma}$. 
	
	\noindent
	We now show that condition~\eqref{EQ:6} holds, as well. Let $\mu= \max_{i,j}\{\frac{\mu_{ij}\kappa_{j}}{\kappa_{i}}\}$. It follows from~\eqref{compositionality} that $\mu<1$. Then by defining $\theta$ as
	$\theta:=\max_{i}\frac{\theta_{i}}{\kappa_{i}}$, we obtain the chain of inequalities in \eqref{Equ1b}. 
	
	\noindent
	We now show that the proposed $\mathcal V$ in~\eqref{Comp: ABF} is an ABF between $\hat \Xi=\mathcal{I}(\hat\Xi_1,\ldots, \hat\Xi_{\mathcal M})$ and $\Xi=\mathcal{I}(\Xi_1,\ldots, \Xi_{\mathcal M})$ with a confidence of $1-\sum_{i=1}^{\mathcal M}\beta_{i}$. By considering events $\mathcal A_i$ as $\mathcal A_i: \Big\{\hat\Xi_i\cong_{\mathcal{S}}\!\Xi_i\Big\}$ for all $i \in\{1,\dots,\mathcal M\}$, we have $\PP\big\{\mathcal A_i\big\}\ge 1-\beta_i$ according to Theorem~\ref{Thm:3}. We aim at quantifying the concurrent occurrence of events $\mathcal A_i$:
	\begin{align}\label{eq:proof11}
	&\PP\big\{\mathcal A_1\cap\dots\cap \mathcal A_{\mathcal M}\big\}=1-\PP\big\{\bar {\mathcal A}_1\cup\dots\cup \bar {\mathcal A}_{\mathcal M} \big\},
	\end{align}
	with $\bar {\mathcal A}_i$ being complements of $\mathcal A_i, \forall i \in\{1,\dots,\mathcal M\}$. Given that
	\begin{align}\notag
	\PP&\big\{\bar {\mathcal A}_1\cup\dots\cup \bar {\mathcal A}_{\mathcal M} \big\}\leq\PP\big\{\bar {\mathcal A}_1\big\}+\dots+\PP\big\{\bar {\mathcal A}_{\mathcal M}\big\},
	\end{align}
	and by leveraging \eqref{eq:proof11}, one can finally conclude that 
	\begin{align}\label{eq:8}
	\PP\big\{\mathcal A_1\!\cap\!\dots\!\cap\! \mathcal A_{\mathcal M}\big\}\!\geq\! 1-\PP\big\{\bar {\mathcal A}_1\big\}+\dots+\PP\big\{\bar {\mathcal A}_{\mathcal M}\big\}
	\!\geq\! 1-\sum_{i =1}^{\mathcal M} \beta_i.
	\end{align}
	Hence, $\mathcal V$ is an ABF between $\hat\Xi$ and $\Xi$ with a confidence of at least $1-\sum_{i=1}^{\mathcal M}\beta_{i}$. \end{proof}

\begin{remark}
	It is worth noting that if one can synthesize $\eta_i$ and $\gamma_i$ during solving the SOP such that $\frac{\eta_i}{\gamma_j}<1$, the circularity condition~\eqref{Assump: small} is automatically fulfilled without requiring any posteriori check.  
\end{remark}

\section{Case Study: Room Temperature Network}\label{Sec:Case}
We demonstrate our data-driven results over a room temperature network composing $100$ rooms with unknown models in a circular topology, each of which is equipped with a cooler. This kind of room network is employed for storing
specific medicines in some low temperatures. The temperature evolution $x(\cdot)$ can be characterized by the following interconnected network~\cite{meyer}:
\begin{align}\notag
\Xi\!:x(k+1)=Ax(k)+\alpha T_{c}\nu(k)+ \digamma T_{E},
\end{align}
where the matrix $A$ has diagonal entries $ a_{ii}=1-2\aleph-\digamma-\alpha\nu_i(k)$, $i\in\{1,\ldots,\mathcal M\}$, off-diagonal entries $a_{i,i+1}=a_{i+1,i}=a_{1,\mathcal M}=a_{\mathcal M,1}=\aleph$, $i\in \{1,\ldots,\mathcal M-1\}$, and other entries as zero. Symbols $\aleph$, $\digamma$, and $\alpha$ are thermal factors between rooms $i \pm 1$ and $i$, the outside environment and the room $i$, and the cooler and the room $i$, respectively.
In addition, $ x(k)=[x_1(k);\ldots;x_{\mathcal M}(k)]$, $T_E=[T_{e_1};\ldots;T_{e_{\mathcal M}}]$, with  $T_{e_i}=-2\,^\circ C$, $\forall i\in\{1,\ldots,{\mathcal M}\}$, being outside temperatures. The cooler temperature is $T_c=5\,^\circ C$ and the control input is $\nu\in\{0,0.05,0.1,0.15,0.2\}$. Now by characterizing each individual room as 
\begin{align}\label{sub_room}
\Xi_i\!:x_i(k+1)&=a_{ii}{x_i}(k)+\aleph (d_{i-1}(k) + d_{i+1}(k))+ \alpha T_{c} \nu_i(k) +\digamma T_{e_i},
\end{align}
with $d_0 = d_{\mathcal M}, d_{\mathcal M+1} = d_1$, one has $\Xi=\mathcal{I}(\Xi_1,\ldots,\Xi_{\mathcal M})$. We assume the model of each room is unknown to us. The main target is to compositionally construct a symbolic abstraction as well as a data-driven ABF via solving $\text{SOP}$~\eqref{SOP}. Accordingly, we utilize the data-driven symbolic abstraction and synthesize controllers regulating the temperature of each room in a safe set $X_i = [-0.5,0.5]$ with a guaranteed probabilistic confidence. It is worth highlighting that the dimension of the sample space for each room is $n_i + p_i = 3$, since each room in the circular interconnection topology is connected to its previous and next rooms.

We consider our APBF as $\mathcal S_i(\varphi_i,x_i,\hat x_i) = \varphi_{1_i}(x_{i} - \hat x_{i})^4 + \varphi_{2_i}(x_{i} - \hat x_{i})^2 + \varphi_{3_i}$. We also fix $\varepsilon_{t_i} = 0.001$, $\beta_i = 10^{-4}$, and $\sigma_i = 0.025$, a-priori. According to~\eqref{EQ:12}, we compute the required number of data for solving $\text{SOP}$ in~\eqref{SOP} as $\mathcal Q = 776$. By solving $\text{SOP}$~\eqref{SOP} with $\mathcal Q$, we obtain the corresponding decision variables as
\begin{align*}
&\mathcal S_i(\varphi_i,x_i,\hat x_i) = 0.2(x_{i} - \hat x_{i})^4 + 0.17(x_{i} - \hat x_{i})^2 + 18,\\
&\gamma_i^* = 5.8,~ \tilde\eta^*_i = 0.02,~ \tilde\theta^*_i = 0.4,~ \xi^*_{\mathcal Q_i}=-0.3093,
\end{align*}
with a fixed $\tilde \mu_i = 0.5$. We now compute $\mathscr{L}_{\mathcal H_{t_i}} = 0.8$ according to Lemma~\ref{Lem:1_1}. We also compute $\varkappa^{-1}(\varepsilon_{t_i})$ according to Lemma~\ref{Function_g} as $\varkappa^{-1}(\varepsilon_{t_i}) = 0.3628$. Since $\xi^*_{\mathcal Q_i}+\max_t\mathscr{L}_{\mathcal H_{t_i}}\varkappa^{-1}(\varepsilon_{t_i}) = -19 \times 10^{-3} \leq 0$, the constructed data-driven $\mathcal S_i$ is an APBF between each unknown room $\Xi_i$ and its symbolic abstraction $\hat\Xi_i$ with $\gamma_i = 5.8,\mu_i = 0.995, \eta_i = 0.02, \theta_i = 0.4051,$ and a confidence of $1 - 10^{-4}$.

We now construct an ABF for the interconnected rooms using data-driven APBF of individual rooms, according to Theorem~\ref{Thm: Comp}. By taking $\kappa_i = 1, \forall i\in\{1,\dots,\mathcal M\}$, the circularity condition in~\eqref{Assump: small} is fulfilled. Hence, one can certify that $\mathcal  V(\varphi, x,\hat x) = \max_i\{\mathcal S_i(\varphi_i,x_i,\hat x_i)\} = \max_i\{0.2(x_{i} - \hat x_{i})^4 + 0.17(x_{i} - \hat x_{i})^2 + 18\}$ is an ABF between the room temperature network $\Xi$ and its symbolic abstraction $\hat\Xi$ with $\gamma = 5.8,\mu = 0.995, \theta = 0.4051,$ and a confidence of $1-\sum_{i=1}^{100}\beta_{i} = 99\%$. Accordingly based on Theorem~\ref{Thm:1}, $\mathscr{R} := \big\{(x,\hat x) \in X \times \hat X \,\big|\, \mathcal V(\varphi,x,\hat x) \leq  0.4051\big\}$
is an $\tilde\epsilon$-approximate alternating bisimulation relation between $\hat\Xi$ and $\Xi$ with $\tilde\epsilon = 0.2643$ and a confidence of $99\%$.

We now leverage the constructed data-driven symbolic abstraction and compositionally design a controller such that the controller regulates state of each unknown room in the comfort zone $[-0.5,0.5]$. To do so, we first synthesize a controller for each abstract room $\hat \Xi_i$ via \texttt{SCOTS} \cite{rungger2016scots} and then refine it back over unknown original room $\Xi_i$. The overall controller for the network is then a vector whose entries are controllers for individual rooms. Closed-loop state trajectories and their corresponding control inputs of a representative room are depicted, respectively, in Figs.~\ref{Simulation} and~\ref{Simulation1}. As observed, the designed controller maintains trajectories of an unknown representative room within the safe set $[-0.5,0.5]$. It is noteworthy that we have considered the basis functions $g_{j_i}(x_i,\hat x_i)$ as monomials over $x_i$ and $\hat x_i$. Consequently, the APBF is treated as a polynomial, given that models of unknown room temperatures are inherently polynomial in nature, in accordance with their underlying physics. It is important to emphasize that our approach is applicable to general class of nonlinear systems, capable of enforcing general temporal logic properties using the proposed data-drive abstractions. The room temperature example here is provided solely for the purpose of illustrating the results.
\begin{figure}[h]
	\centering 
	\includegraphics[width=0.55\linewidth]{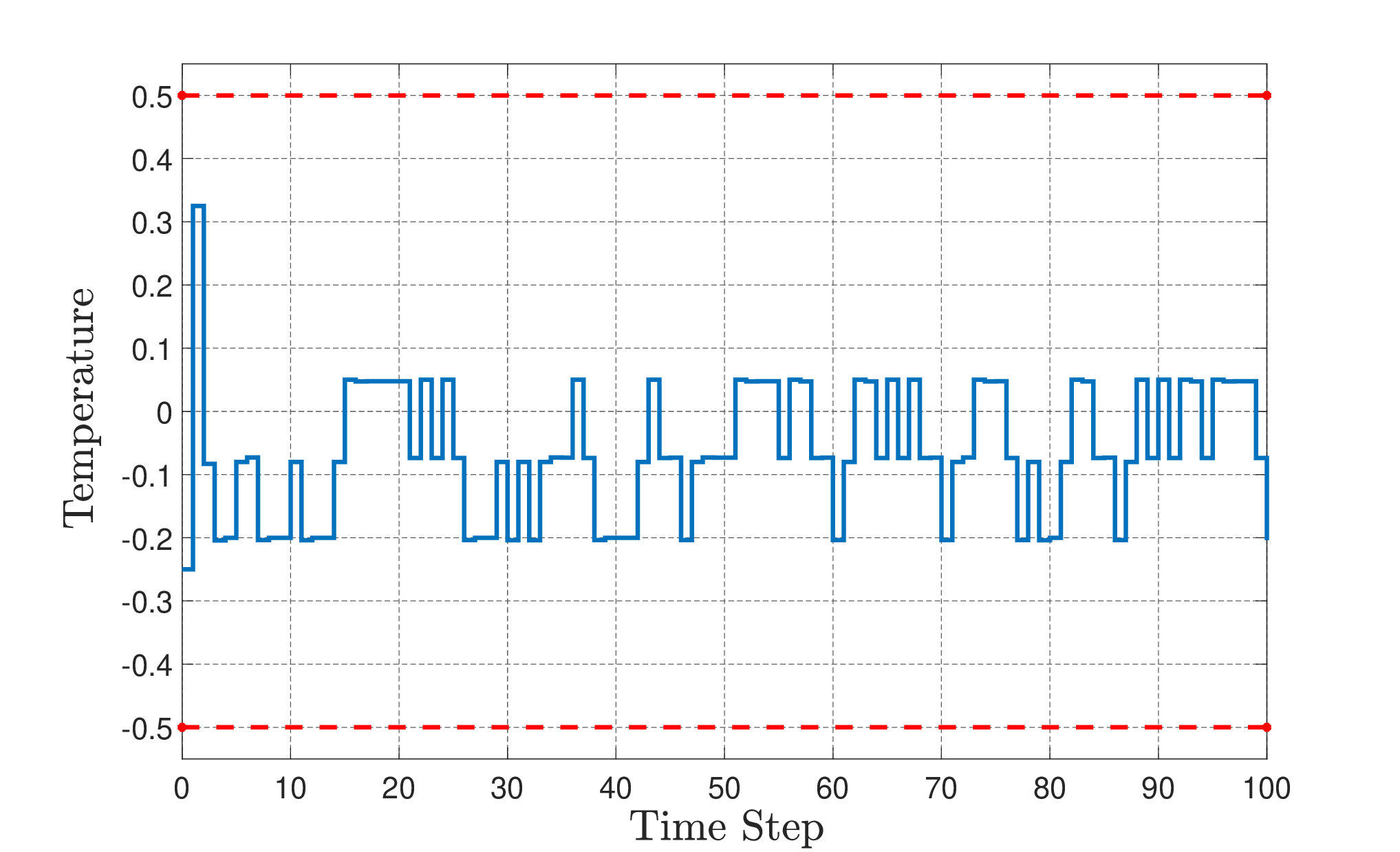}
	\caption{Closed-loop state trajectories of a representative room by designing the controller over its data-driven symbolic abstraction.}
	\label{Simulation}
\end{figure}

\begin{figure}[h]
	\centering 
	\includegraphics[width=0.55\linewidth]{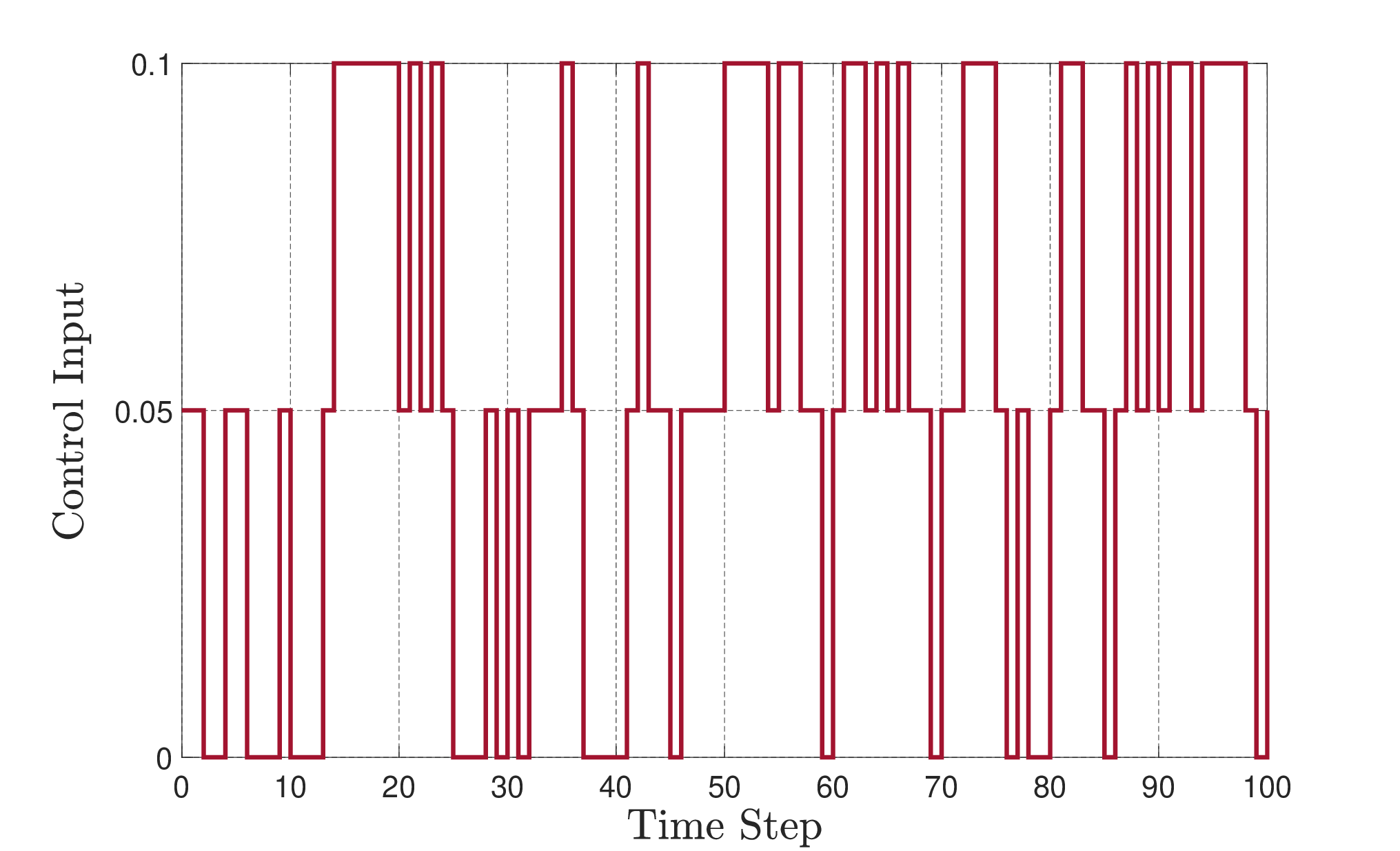}
	\caption{A synthesized control input for a representative room via its data-driven symbolic abstraction.}
	\label{Simulation1}
\end{figure}

\section{Conclusion}
In this article, we developed a data-driven divide-and-conquer approach using small-gain reasoning to construct symbolic abstractions for interconnected control networks with unknown mathematical models. We first built a relation between each unknown subsystem and its data-driven symbolic abstraction using alternating pseudo-bisimulation functions (APBF), while providing a guaranteed probabilistic confidence. We then proposed a compositional approach via $\max$-type small-gain reasoning to construct an \emph{alternating bisimulation function} for an unknown interconnected network using its data-driven APBF of subsystems. We illustrated the efficacy of our data-driven results over a room temperature network composing $100$ rooms with unknown models.

\bibliographystyle{alpha}
\bibliography{biblio}

\end{document}